\documentclass{article}
\usepackage[utf8]{inputenc}
\usepackage{proof,color}
\usepackage{toyooka_logic}
\usepackage{times}
\usepackage{lineno}
\usepackage{color}
\usepackage{latexsym}
\newcommand{\imp}{\mathbin{\to}}
\theoremstyle{definition}
\newtheorem{definition}{Definition}[]
\newtheorem{corollary}{Corollary}[]
\newtheorem{remark}{Remark}[]

\title{Semantic Incompleteness of Hilbert System for a Combination of Classical and Intuitionistic Propositional Logic}
\author{Masanobu Toyooka and Katsuhiko Sano\thanks{This work of the first author was partially supported by Grant-in-Aid for JSPS Fellows Grant Number JP22J20341. 
The work of the second author was partially supported by JSPS KAKENHI Grant-in-Aid for Scientific Research (B) Grant Number JP22H00597 and (C) Grant number JP19K12113.}}
\date{\today}

\begin{document}

\maketitle





\abstract{This paper shows Hilbert system $(\mathbf{C+J})^{-}$, given by del Cerro and Herzig (1996) is semantically incomplete. This system is proposed as a proof theory for Kripke semantics for a combination of intuitionistic and classical propositional logic, which is obtained by adding the natural semantic clause of classical implication into intuitionistic Kripke semantics. Although Hilbert system $(\mathbf{C+J})^{-}$ contains intuitionistic modus ponens as a rule, it does not contain classical modus ponens. This paper gives an argument ensuring that the system $(\mathbf{C+J})^{-}$ is semantically incomplete because of the absence of classical modus ponens. Our method is based on the logic of paradox, which is a paraconsistent logic proposed by Priest (1979).}


\section{Introduction}
This paper shows semantic incompleteness of Hilbert system $(\mathbf{C+J})^{-}$, given by del Cerro and Herzig~\cite{Cerro1996}. This system was provided for a combination of intuitionistic and classical propositional logic. This combined logic has two implications: intuitionistic one (denoted by ``$\imp_{\mathtt{i}}$'') and classical one (denoted by ``$\imp_{\mathtt{c}}$''). This logic also has falsum, conjunction, and disjunction as connectives which are common to intuitionistic and classical logic. Since this logic is constructed to be a combination of intuitionistic and classical logic, Hilbert system $(\mathbf{C+J})^{-}$ has to enable us to be a conservative extension of both logics.

The way of constructing the combination is easier to understand from a semantic respect. The semantics for this combination is given in~\cite{Humberstone1979,Cerro1996}, and the basic idea is adding classical implication into intuitionistic Kripke semantics, which means the satisfaction relation of classical implication, denoted by ``$\imp_{\mathtt{c}}$'', is given in a Kripke model as follows:

\[
\begin{array}{lll}
w \models_{M} A \imp_{\mathtt{c}} B & \iff & w \models_{M} A \text{ implies } w \models_{M} B,\\ 
\end{array}
\]

\noindent where $M$ is an intuitionistic Kripke model, $w$ is a possible world in $M$, and $R$ is a preorder equipped in $M$. The other parts of the Kripke semantics is the same as that of intuitionistic propositional logic. 

The syntax $\mathcal{L}$ consists of a countably infinite set $\mathsf{Prop}$ of propositional variables and the following logical connectives: falsum $\bot$, disjunction $\lor$, conjunction $\land$, intuitionistic implication $\imp_{\mathtt{i}}$, and classical implication $\imp_{\mathtt{c}}$. We denote by $\mathcal{L}_{\mathbf{C}}$ (the syntax for the classical logic) and $\mathcal{L}_{\mathbf{J}}$ (the syntax for the intuitionistic logic) the resulting syntax dropping $\imp_{\mathtt{i}}$ and $\imp_{\mathtt{c}}$ from $\mathcal{L}$, respectively. 

The set $\mathsf{Form}$ of all formulas in the syntax is defined inductively as follows: 

\[
 A ::= p \mid \bot \mid A \lor A \mid A \land A \mid A \imp_{\mathtt{i}} A \mid A \imp_{\mathtt{c}} A,  
\]

\noindent where $p \in \mathsf{Prop}$. We denote by $\mathsf{Form}_{\mathbf{C}}$ and $\mathsf{Form}_{\mathbf{J}}$ the set of all classical formulas and the set of all intuitionistic formulas, respectively. We define $\top$ := $\bot \imp_{\mathtt{i}} \bot$, $\neg_{\mathtt{c}} A$ := $A \imp_{\mathtt{c}} \bot$, and $\neg_{\mathtt{i}} A$ := $A \imp_{\mathtt{i}} \bot$. 

Let us move to the semantics for the syntax $\mathcal{L}$. 
\begin{definition}
\label{def:model1}
 A {\em model} is a tuple $M$ = $(W,R,V)$ where 
 \begin{itemize}
     \item $W$ is a non-empty set of possible worlds,
     \item $R$ is a preorder on $W$, i.e., $R$ satisfies reflexivity and transitivity,
     \item $V: \mathsf{Prop} \to \mathcal{P}(W)$ is a valuation function satisfying the following {\em heredity} condition: $w \in V(p)$ and $w R v$ jointly imply $v \in V(p)$ for all possible worlds $w, v \in W$.
 \end{itemize}
\end{definition}  
\begin{definition}
\label{def:model2}
Given a model $M$ = $(W,R,V)$, a possible world $w \in W$ and a formula $A$, the {\em satisfaction relation} $w \models_{M} A$ is inductively defined as follows: 
\[
\begin{array}{lll}
w \models_{M} p & \iff& w \in V(p),\\
w \not\models_{M} \bot & \text{}\\
w \models_{M} A \land B & \iff & w \models_{M} A \text{ and } w \models_{M} B,\\
w \models_{M} A \lor B & \iff & w \models_{M} A \text{ or } w \models_{M} B,\\
w \models_{M} A \imp_{\mathtt{i}} B & \iff & \text{for all } v \in W, (wRv \text{ and } v \models_{M} A \text{ jointly imply } v \models_{M} B),\\
w \models_{M} A \imp_{\mathtt{c}} B & \iff & w \models_{M} A \text{ implies } w \models_{M} B.\\
\end{array}
\]
\end{definition}
\noindent Let $\Gamma$ be a set of formulas and $A$ be a formula.
A formula $A$ is called a {\em semantic consequence} of $\Gamma$, written as $\Gamma \models A$, if, for all models $M$ = $(W,R,V)$ and all worlds $w \in W$, $w \models_{M} C$ for all formulas $C \in \Gamma$  implies $w \models_{M} A$.
A formula $A$ is {\em valid} if $\emptyset \models A$ holds.

\noindent From Definition \ref{def:model2}, the following satisfaction relation for a formula whose main connective is $\neg_{\mathtt{c}}$ or $\neg_{\mathtt{i}}$ is obtained:
\[
\begin{array}{lll}
w \models_{M} \neg_{\mathtt{c}} A & \iff & w \not\models_{M} A,\\
w \models_{M} \neg_{\mathtt{i}} A & \iff & \text{for all } v \in W, (wRv \text{ implies } v \not\models_{M} A). \\
\end{array}
\]

A notion of heredity, which is an important notion in intuitionistic logic, can be defined in this Kripke semantics.

\begin{dfn}[Heredity]
\label{dfn:her}
A formula $A$ satisfies {\em heredity} iff for any model $M$ and $w,v \in W$, $wRv$ and $w \models_{M} A$ jointly imply $v \models_{M} A$.
\end{dfn}

\noindent In pure intuitionistic logic, any formula satisfies heredity. However, if we add classical implication, there will exist a formula which does not satisfy heredity.

\begin{prop}
\label{prop:breher}
A formula $\neg_{\mathtt{c}} p$ does not satisfy heredity.
\end{prop}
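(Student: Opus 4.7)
The plan is to exhibit an explicit counter-model: a model $M$ and two worlds $w,v$ with $wRv$ such that $w\models_M \neg_{\mathtt{c}} p$ while $v\not\models_M \neg_{\mathtt{c}} p$, which directly contradicts the heredity condition of Definition~\ref{dfn:her}. Unpacking the derived clause $w\models_M \neg_{\mathtt{c}} p \iff w\not\models_M p$ shows what we need to arrange: $p$ should be false at $w$ but true at some $R$-successor $v$ of $w$.

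The smallest candidate is the two-point model $M=(W,R,V)$ with $W=\{w,v\}$, $R$ the reflexive-transitive closure of $\{(w,v)\}$, and $V(p)=\{v\}$. First I would check that this is indeed a model in the sense of Definition~\ref{def:model1}: $R$ is a preorder by construction, and the heredity condition on the valuation is satisfied because the only world in $V(p)$ is $v$, whose only $R$-successor is itself.

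Next I would verify the two satisfaction facts. Since $w\notin V(p)$, we have $w\not\models_M p$, and hence $w\models_M \neg_{\mathtt{c}} p$ by the derived clause for $\neg_{\mathtt{c}}$. On the other hand, $v\in V(p)$ gives $v\models_M p$, hence $v\not\models_M \neg_{\mathtt{c}} p$. Combined with $wRv$, this produces exactly the failure of heredity demanded by Definition~\ref{dfn:her}, so $\neg_{\mathtt{c}} p$ does not satisfy heredity.

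There is no real obstacle here; the only subtle point is that heredity must be verified for the valuation $V$ (not for arbitrary formulas), and the counterexample precisely exploits that while $V$ is forced to be hereditary, the classical-implication clause for $\imp_{\mathtt{c}}$ evaluates only locally at $w$ and therefore does not propagate along $R$. This local-vs-global mismatch is the essential phenomenon the proposition records, and the two-world model is the cleanest witness to it.
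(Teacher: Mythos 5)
Your counter-model argument is correct and complete: the two-point model with $V(p)=\{v\}$ is a legitimate model (the valuation heredity condition holds), and the derived clause for $\neg_{\mathtt{c}}$ gives $w\models_M\neg_{\mathtt{c}}p$ but $v\not\models_M\neg_{\mathtt{c}}p$ with $wRv$. The paper omits the proof (deferring to cited earlier work), but this is exactly the standard witness one would expect, so there is nothing to add.
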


\noindent Corresponding to Proposition \ref{prop:breher}, the following proposition also holds.

\begin{prop}
\label{prop:breher2}
Both $\neg_{\mathtt{c}} p \imp_{\mathtt{c}} (q \imp_{\mathtt{i}} \neg_{\mathtt{c}} p)$ and $\neg_{\mathtt{c}} p \imp_{\mathtt{i}} (q \imp_{\mathtt{i}} \neg_{\mathtt{c}} p)$ is {\em invalid}.
\end{prop}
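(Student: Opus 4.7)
The plan is to exhibit a single countermodel that refutes both formulas at the same world. The key hint comes from Proposition~\ref{prop:breher}: since $\neg_{\mathtt{c}} p$ can move from true to false along the preorder, any intuitionistic implication whose consequent contains $\neg_{\mathtt{c}} p$ after an $\imp_{\mathtt{i}}$ is liable to fail. So I would extract from (the proof of) Proposition~\ref{prop:breher} a two-world model $M = (W,R,V)$ with $W = \{w,v\}$, $R$ the reflexive closure of $\{(w,v)\}$, and $V(p) = \{v\}$, augmented by $V(q) = \{v\}$. Heredity is immediate because the only non-reflexive $R$-arrow points into $v$, from which nothing further escapes.

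Next I would carry out the routine evaluation at $w$. Since $w \notin V(p)$, we get $w \not\models_M p$, hence $w \models_M \neg_{\mathtt{c}} p$ by the clause for $\imp_{\mathtt{c}}$. On the other hand, $v \models_M p$ and $v \models_M q$, so $v \not\models_M \neg_{\mathtt{c}} p$. Using the witness $v$ together with $wRv$, the clause for $\imp_{\mathtt{i}}$ then yields $w \not\models_M q \imp_{\mathtt{i}} \neg_{\mathtt{c}} p$.

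From these two facts both invalidities follow at once. For the classical version, $w \models_M \neg_{\mathtt{c}} p$ and $w \not\models_M q \imp_{\mathtt{i}} \neg_{\mathtt{c}} p$ give, by the clause for $\imp_{\mathtt{c}}$, $w \not\models_M \neg_{\mathtt{c}} p \imp_{\mathtt{c}} (q \imp_{\mathtt{i}} \neg_{\mathtt{c}} p)$. For the intuitionistic version, reflexivity $wRw$ together with the same pair of facts gives, by the clause for $\imp_{\mathtt{i}}$, $w \not\models_M \neg_{\mathtt{c}} p \imp_{\mathtt{i}} (q \imp_{\mathtt{i}} \neg_{\mathtt{c}} p)$. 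Thus both formulas are invalid, witnessed by the same model and world.

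There is no real obstacle here; the whole argument is a direct computation once one recognizes that the Proposition~\ref{prop:breher} counterexample already supplies the essential ingredient. The only care needed is to make sure heredity holds for every propositional variable after adding $q$, which the choice $V(q) = \{v\}$ (a world with no proper $R$-successors) ensures for free.
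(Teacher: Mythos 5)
Your countermodel is correct and complete: the two-world model with $V(p)=V(q)=\{v\}$ does refute both formulas at $w$, and heredity and the preorder conditions are satisfied. The paper itself omits the proof of this proposition (deferring to cited earlier work), and your argument is exactly the standard one that would be given there, so there is nothing to compare or correct.
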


\noindent Proposition \ref{prop:breher2} implies an intuitionistic theorem $A \imp_{\mathtt{i}} (B \imp_{\mathtt{i}} A)$ is no longer a theorem in this combination. An argument about Propositions \ref{prop:breher} and \ref{prop:breher2} was given in~\cite{Toyooka2021a,Toyooka2022b}.

Let us move to a proof theory given in~\cite{Cerro1996}. 
Before giving the detail of their axiomatization, we introduce the notion of {\em persistent formulas} as follows:
\[
E ::= \bot \mid p \mid A \imp_{\mathtt{i}} A \mid E \land E \mid E \lor E,
\]
where $p \in \mathsf{Prop}$ and $A \in \mathsf{Form}$. 
\footnote {Del Cerro and Herzig~\cite{Cerro1996} did not defined $\bot$ as a persistent formula.
However, this is not an essential point, since $\bot$ is equivalent to $p \land \neg_{\mathtt{i}} p$, which is a persistent formula in the sense of~\cite{Cerro1996}. This slight change allows us to state that all formulas of $\mathcal{L}_{\mathbf{J}}$ is persistent.} 

\begin{dfn}
Hilbert system $(\mathbf{C+J})^{-}$ consists of axioms 
$(\mathtt{CL})$, $(\mathtt{CK})$, $(\mathtt{ID})$, $(\mathtt{CMP})$, and $(\mathtt{PER})$ of Table \ref{table:minus} and rules $(\mathtt{MPI})$ and $(\mathtt{RCN})$ of Table \ref{table:minus}. 
Hilbert system $\mathbf{C+J}$ is the extended system of $(\mathbf{C+J})^{-}$ with the rule $(\mathtt{MPC})$ of Table \ref{table:minus}.
\end{dfn}

\begin{table}[htbp]
\centering
\caption{Hilbert Systems $(\mathbf{C+J})^{-}$ and $\mathbf{C+J}$}
\label{table:minus}
\begin{tabular}{|cl|}
    \hline
    \multicolumn{2}{|c|}{ Hilbert System $(\mathbf{C+J})^{-}$ } \\
     (\texttt{CL}) & All instances of classical tautologies \\
     (\texttt{CK})& $(A \imp_{\mathtt{i}} (B \imp_{\mathtt{c}} C)) \imp_{\mathtt{c}} ((A \imp_{\mathtt{i}} B) \imp_{\mathtt{c}} (A \imp_{\mathtt{i}} C))$ \\ (\texttt{ID})& $A \imp_{\mathtt{i}} A$ \\
     (\texttt{CMP}) &  $(A \imp_{\mathtt{i}} B) \imp_{\mathtt{c}} (A \imp_{\mathtt{c}} B)$ \\
     (\texttt{PER}) & $A \imp_{\mathtt{c}} (B \imp_{\mathtt{i}} A)^{\dagger}$ \quad
     $\dagger$: $A$ is {\em persistent}. \\
     (\texttt{MPI}) & From $A$ and $A \imp_{\mathtt{i}} B$ we may infer $B$ \\
    (\texttt{RCN}) & From $A$ we may infer $B \imp_{\mathtt{i}} A$ \\
    \hline
    \multicolumn{2}{|c|}{ Hilbert System $\mathbf{C+J}$ }\\
    &  All the axioms and rules of $(\mathbf{C+J})^{-}$\\ 
    {(\texttt{MPC})}& From $A$ and $A \imp_{\mathtt{c}} B$ we may infer $B$ \\ \hline
\end{tabular}
\end{table}

\noindent An important axiom is (\texttt{PER}). Recall that a formula $\neg_{\mathtt{c}} p \imp_{\mathtt{i}} (q \imp_{\mathtt{i}} \neg_{\mathtt{c}} p)$ is not valid in the Kripke semantics, as is described in Proposition \ref{prop:breher2}. In order for this formula to be underivable, an antecedent formula $A$ of (\texttt{PER}) should be restricted to a persistent formula.

In the next section, we show Hilbert system $(\mathbf{C+J})^{-}$ is semantically incomplete. This semantic incompleteness is the result of the absence of (\texttt{MPC}). The system $\mathbf{C+J}$ may be what del Cerro and Herzig~\cite{Cerro1996} intended to provide, but (\texttt{MPC}) does not exist, which may be an unfortunate typo.
It is reasonable that (\texttt{MPC}) is necessary, because $\mathbf{C+J}$ is based on an idea of an axiomatization of conditional logic, which adds axioms and rules on the conditional on the top of classical tautologies and the rule of classical modus ponens (see, e.g.,~\cite{Chellas1975,chellas_1980,Olivetti2007}).

The following proposition ensures the rule \texttt{(MPI)} can be deleted from Hilbert system $\mathbf{C+J}$.

\begin{prop}
If we drop $\texttt{(MPI)}$ from $\mathbf{C+J}$, 
$\texttt{(MPI)}$ is derivable in the resulting system.
\end{prop}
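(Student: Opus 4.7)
The plan is to derive (\texttt{MPI}) using only the axioms of $(\mathbf{C+J})^{-}$ together with (\texttt{MPC}), exploiting the fact that (\texttt{CMP}) bridges intuitionistic implication and classical implication. The key observation is that (\texttt{CMP}), namely $(A \imp_{\mathtt{i}} B) \imp_{\mathtt{c}} (A \imp_{\mathtt{c}} B)$, lets us convert an intuitionistic conditional into a classical one at the cost of a single application of classical modus ponens.

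Concretely, suppose we have derived $A$ and $A \imp_{\mathtt{i}} B$ in the system obtained by dropping (\texttt{MPI}) from $\mathbf{C+J}$. First, I take the axiom (\texttt{CMP}) with the appropriate instantiation, obtaining $(A \imp_{\mathtt{i}} B) \imp_{\mathtt{c}} (A \imp_{\mathtt{c}} B)$. Applying (\texttt{MPC}) to this axiom together with the assumption $A \imp_{\mathtt{i}} B$, I obtain $A \imp_{\mathtt{c}} B$. Second, applying (\texttt{MPC}) once more, now to $A$ and $A \imp_{\mathtt{c}} B$, I obtain $B$, which is exactly the conclusion of (\texttt{MPI}).

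There is essentially no obstacle here: the derivation is a direct two-step application of (\texttt{MPC}), and no other axiom or rule beyond (\texttt{CMP}) and (\texttt{MPC}) is needed. The only thing worth noting explicitly in the write-up is that the rule (\texttt{MPC}) is being used twice, first on the axiom (\texttt{CMP}) together with $A \imp_{\mathtt{i}} B$, and then on the resulting formula $A \imp_{\mathtt{c}} B$ together with $A$. Consequently, the redundancy of (\texttt{MPI}) in the presence of (\texttt{CMP}) and (\texttt{MPC}) reinforces the remark that the intended system of del Cerro and Herzig must include (\texttt{MPC}), and that its absence in $(\mathbf{C+J})^{-}$ is what makes the incompleteness argument possible.
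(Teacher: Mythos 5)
Your derivation is correct and is essentially identical to the paper's own proof: both apply (\texttt{MPC}) to the (\texttt{CMP}) instance and $A \imp_{\mathtt{i}} B$ to obtain $A \imp_{\mathtt{c}} B$, and then apply (\texttt{MPC}) again with $A$ to conclude $B$. No issues.
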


\begin{proof}
Suppose $A$ and $A \imp_{\mathtt{i}} B$ are theorems in the resulting system. By \texttt{(CMP)} and
\texttt{(MPC)}, $A \imp_{\mathtt{c}} B$ is obtained from $A \imp_{\mathtt{i}} B$ . By applying \texttt{(MPC)} to $A$ and $A \imp_{\mathtt{c}} B$, $B$ is obtained, as is desired.
\end{proof}

\noindent Thus, in order to obtain $\mathbf{C+J}$ from $(\mathbf{C+J})^{-}$, to replace (\texttt{MPI}) with (\texttt{MPC}) is sufficient.

\section{Semantic Incompleteness of Hilbert System $(\mathbf{C + J})^{-}$}
\label{sec:app}
In this section, we show Hilbert system $(\mathbf{C+J})^{-}$ is semantically incomplete. 
We provide a formula $C$ such that $C$ is valid in the semantics described in Definitions \ref{def:model1} and \ref{def:model2} but $C$ is not a theorem of $(\mathbf{C+J})^{-}$.
Our candidate for $C$ is $(p \land (p \imp_{\mathtt{c}} q)) \imp_{\mathtt{i}} q$. The following is easy to establish.

\begin{prop}
\label{fact:ppq}
The formula $(p \land (p \imp_{\mathtt{c}} q)) \imp_{\mathtt{i}} q$ is valid in Kripke semantics in Definitions \ref{def:model1} and \ref{def:model2}.
\end{prop}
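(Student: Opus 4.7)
The plan is a direct semantic verification: unfold the Kripke clauses for $\imp_{\mathtt{i}}$, $\land$, and $\imp_{\mathtt{c}}$ given in Definition \ref{def:model2} and check that the implication holds at every world of every model. No appeal to heredity or to any special structural property of $R$ beyond what the definition supplies will be needed, so this is quite different in spirit from the incompleteness argument to follow.

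Concretely, I would fix an arbitrary model $M=(W,R,V)$ and an arbitrary world $w\in W$. By the clause for $\imp_{\mathtt{i}}$, showing $w\models_M (p\land(p\imp_{\mathtt{c}} q))\imp_{\mathtt{i}} q$ reduces to fixing an arbitrary $v\in W$ with $wRv$ and $v\models_M p\land(p\imp_{\mathtt{c}} q)$, and deriving $v\models_M q$. From the clause for $\land$ one extracts $v\models_M p$ and $v\models_M p\imp_{\mathtt{c}} q$; the clause for $\imp_{\mathtt{c}}$ then says exactly that $v\models_M p$ entails $v\models_M q$, so we are done. Since $M$ and $w$ were arbitrary, the formula is valid.

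There is essentially no obstacle here: the accessibility relation $R$ plays no role beyond locating the world $v$ at which all three atomic facts are checked, and crucially the classical implication is evaluated \emph{at the same world} $v$, which is precisely what makes the modus-ponens-like reasoning go through locally. It is worth noting in the write-up that this local evaluation of $\imp_{\mathtt{c}}$ is the reason the schema $(A\land (A\imp_{\mathtt{c}} B))\imp_{\mathtt{i}} B$ is valid even though, as the paper will argue, it is not derivable in $(\mathbf{C+J})^{-}$ without $(\texttt{MPC})$.
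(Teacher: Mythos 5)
Your proof is correct and is exactly the direct unfolding of the satisfaction clauses that the paper has in mind when it says the proposition "is easy to establish" (the paper omits the proof entirely). The key observation — that $\imp_{\mathtt{c}}$ is evaluated at the same world $v$ where $p$ holds, so the modus-ponens step goes through locally — is precisely right.
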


\noindent What we need to show is to establish that $(p \land (p \imp_{\mathtt{c}} q)) \imp_{\mathtt{i}} q$ is not a theorem of $(\mathbf{C+J})^{-}$. 
For this purpose, we need to consider a non-standard semantics such that the soundness holds to the original system  but $(p \land (p \imp_{\mathtt{c}} q)) \imp_{\mathtt{i}} q$ is not valid. 

In order to make such a semantics, we utilize three-valued semantics for a paraconsistent logic by Priest (cf.~\cite{Priest2002}), i.e., the logic of paradox~\cite{Priest1979}, which allows the third truth value of ``both true and false'' $\setof{0,1}$ in addition to the values $\setof{0}$ (``false only'') and $\setof{1}$ (``true only'').  

\begin{definition}
\label{def:non}
A valuation $v$ is a mapping from $\mathsf{Prop}$ to $\{ \{0 \}, \{ 0,1 \}, \{ 1 \} \}$. A valuation $v$ is uniquely extended to a function $\overline{v}$ from the set $\mathsf{Form}$ of all formulas to $\{ \{0 \}, \{ 0,1 \}, \{ 1 \} \}$ as follows: 
\[
\begin{array}{rll}
1 \in \overline{v}(\bot) &  & \text{Never}, \\ 0 \in \overline{v}(\bot) &  & \text{Always},\\ 1 \in \overline{v} (A \land B) & \iff& 1 \in \overline{v}(A) \text{ and } 1 \in \overline{v}(B), \\
0 \in \overline{v} (A \land B) & \iff& 0 \in \overline{v}(A) \text{ or } 0 \in \overline{v}(B), \\
1 \in \overline{v} (A \lor B) & \iff& 1 \in \overline{v}(A) \text{ or } 1 \in \overline{v}(B), \\
0 \in \overline{v} (A \lor B) & \iff& 0 \in \overline{v}(A) \text{ and } 0 \in \overline{v}(B), \\
1 \in \overline{v} (A \imp_{\mathtt{c}} B) & \iff& 0 \in \overline{v}(A) \text{ or } 1 \in \overline{v}(B), \\
0 \in \overline{v} (A \imp_{\mathtt{c}} B) & \iff& 1 \in \overline{v}(A) \text{ and } 0 \in \overline{v}(B), \\
1 \in \overline{v} (A \imp_{\mathtt{i}} B) & \iff& 1 \notin \overline{v}(A) \text{ or } 1 \in \overline{v}(B), \\
0 \in \overline{v} (A \imp_{\mathtt{i}} B) & \iff& 1 \in \overline{v}(A) \text{ and } 0 \in \overline{v}(B). \\
\end{array}
\]
A consequence relation $\Sigma \models_{3} A$ is defined as: if $1 \in \overline{v}(B)$ holds for all $B \in \Sigma$ then $1 \in \overline{v}(A)$. We say that a formula $A$ is {\em $3$-valid} if $\models_{3} A$ holds. 
\end{definition}

\begin{prop}
\label{prop:lem}
For every valuation $v: \mathsf{Prop} \to \setof{\setof{1}, \setof{0,1}, \setof{0}}$ and every $A \in \mathsf{Form}$, $1 \in \overline{v}(A)$ or $0 \in \overline{v}(A)$. 
\end{prop}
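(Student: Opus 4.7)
The plan is to prove the statement by structural induction on the formula $A$, showing at each step that the inductive hypothesis ensures at least one of $0$ or $1$ lies in $\overline{v}(A)$.

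For the base cases, when $A$ is a propositional variable $p$, we have $\overline{v}(p) = v(p) \in \setof{\setof{1}, \setof{0,1}, \setof{0}}$, each of which contains $0$ or $1$. When $A = \bot$, the clause $0 \in \overline{v}(\bot)$ holds always. For the inductive step, I would treat each binary connective in turn. For $A = B \land C$, the induction hypothesis gives $\{0,1\} \cap \overline{v}(B) \ne \emptyset$ and likewise for $C$; if both contain $1$ then $1 \in \overline{v}(B \land C)$, and otherwise the one missing $1$ must contain $0$ by IH, giving $0 \in \overline{v}(B \land C)$ by the clause for $\land$. The case of $A = B \lor C$ is symmetric: if both lack $1$ then both contain $0$ (by IH), so $0 \in \overline{v}(B \lor C)$; otherwise at least one contains $1$, giving $1 \in \overline{v}(B \lor C)$.

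For $A = B \imp_{\mathtt{c}} C$, I would split on whether $1 \in \overline{v}(B)$. If $1 \notin \overline{v}(B)$, then by IH $0 \in \overline{v}(B)$, hence $1 \in \overline{v}(B \imp_{\mathtt{c}} C)$. If $1 \in \overline{v}(B)$, then applying IH to $C$ yields either $1 \in \overline{v}(C)$ (so $1 \in \overline{v}(B \imp_{\mathtt{c}} C)$) or $0 \in \overline{v}(C)$ (so $0 \in \overline{v}(B \imp_{\mathtt{c}} C)$). The case of $A = B \imp_{\mathtt{i}} B'$ is handled in exactly the same way, using that the ``$1$-clause'' and ``$0$-clause'' for $\imp_{\mathtt{i}}$ have the same shape as for $\imp_{\mathtt{c}}$ with respect to membership of $1$ in the antecedent and of $0$ in the consequent.

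I do not expect a real obstacle here; the argument is a routine structural induction, and the only point that requires a little care is observing that in each clause of Definition~\ref{def:non} the right-hand sides already exhaust the relevant cases, so the induction hypothesis (membership of $0$ or $1$ in the subformulas' values) suffices to conclude membership of $0$ or $1$ in the whole formula's value. In particular, no separate lemma is needed to handle the derived connectives $\top$, $\neg_{\mathtt{c}}$, $\neg_{\mathtt{i}}$, since they are covered by the cases already listed.
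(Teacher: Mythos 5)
Your proof is correct and follows the same route as the paper, which simply states that the result is obtained by induction on $A$; your case analysis fills in exactly the routine details that the paper leaves implicit. No issues.
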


\begin{proof}
Fix any valuation $v: \mathsf{Prop} \to \setof{\setof{1}, \setof{0,1}, \setof{0}}$. 
By induction on $A$, we can obtain the desired statement. 
\end{proof}
    
\begin{remark}
Let us denote $\{ 1 \}, \{ 0,1 \}, \{ 0 \}$ by $\mathbf{t}, \mathbf{b} ,\mathbf{f}$, respectively. Then the semantics above provides the three valued truth table, described in Table \ref{table:truth}, where the values $\mathbf{t}$ and $\mathbf{b}$ are defined as ``designated values''.

\begin{table}[ht]
\caption{Three Valued Truth Table}
\label{table:truth}
\centering
    \begin{tabular}{c|c|c|c} 
         $\land$ & $\mathbf{t}$ & $\mathbf{b}$ & $\mathbf{f}$ \\ \hline
         $\mathbf{t}$ & $\mathbf{t}$ & $\mathbf{b}$ & $\mathbf{f}$ \\ \hline
         $\mathbf{b}$ & $\mathbf{b}$ & $\mathbf{b}$ & $\mathbf{f}$ \\ \hline
         $\mathbf{f}$ & $\mathbf{f}$ & $\mathbf{f}$ & $\mathbf{f}$ \\ \hline 
    \end{tabular}
    \quad
    \begin{tabular}{c|c|c|c} 
         $\lor$ & $\mathbf{t}$ & $\mathbf{b}$ & $\mathbf{f}$ \\ \hline
         $\mathbf{t}$ & $\mathbf{t}$ & $\mathbf{t}$ & $\mathbf{t}$ \\ \hline
         $\mathbf{b}$ & $\mathbf{t}$ & $\mathbf{b}$ & $\mathbf{b}$ \\ \hline
         $\mathbf{f}$ & $\mathbf{t}$ & $\mathbf{b}$ & $\mathbf{f}$ \\ \hline 
    \end{tabular}
    \quad
    \begin{tabular}{c|c|c|c} 
         $\imp_{\mathtt{c}}$ & $\mathbf{t}$ & $\mathbf{b}$ & $\mathbf{f}$ \\ \hline
         $\mathbf{t}$ & $\mathbf{t}$ & $\mathbf{b}$ & $\mathbf{f}$ \\ \hline
         $\mathbf{b}$ & $\mathbf{t}$ & $\mathbf{b}$ & $\mathbf{b}$ \\ \hline
         $\mathbf{f}$ & $\mathbf{t}$ & $\mathbf{t}$ & $\mathbf{t}$ \\ \hline 
    \end{tabular}
    \quad
     \begin{tabular}{c|c|c|c} 
         $\imp_{\mathtt{i}}$ & $\mathbf{t}$ & $\mathbf{b}$ & $\mathbf{f}$ \\ \hline
         $\mathbf{t}$ & $\mathbf{t}$ & $\mathbf{b}$ & $\mathbf{f}$ \\ \hline
         $\mathbf{b}$ & $\mathbf{t}$ & $\mathbf{b}$ & $\mathbf{f}$ \\ \hline
         $\mathbf{f}$ & $\mathbf{t}$ & $\mathbf{t}$ & $\mathbf{t}$ \\ \hline 
    \end{tabular}
    \quad
    \begin{tabular}{c|c}
          & $\bot$ \\ \hline
         $\mathbf{t}$ & $\mathbf{f}$ \\ \hline
         $\mathbf{b}$ & $\mathbf{f}$ \\ \hline
         $\mathbf{f}$ & $\mathbf{f}$ \\ \hline
    \end{tabular}
\end{table}

\noindent By recalling $\neg_{\mathtt{c}} A$ := $A \imp_{\mathtt{c}} \bot$  and $\neg_{\mathtt{i}} A$ := $A \imp_{\mathtt{i}} \bot$ respectively, we can also obtain the following satisfaction relation for negations:
\[
\begin{array}{lll}
1 \in \overline{v}(\neg_{\mathtt{c}} A) & \iff & 0 \in \overline{v}(A), \\
0 \in \overline{v}(\neg_{\mathtt{c}} A) & \iff & 1 \in \overline{v}(A), \\
1 \in \overline{v}(\neg_{\mathtt{i}} A) & \iff & 1 \notin \overline{v}(A), \\
0 \in \overline{v}(\neg_{\mathtt{i}} A) & \iff & 1 \in \overline{v}(A).
\end{array}
\]
Therefore, $\neg_{\mathtt{c}}$ is De Morgan negation (cf.~\cite{Priest2002}). We can also get the truth table for $\neg_{\mathtt{c}}$ and $\neg_{\mathtt{i}}$, described in Table \ref{tab:neg}.

\begin{table}[htbp]
\caption{Truth Table for Negations}
\label{tab:neg}
    \centering
     \begin{tabular}{c|c}
         $A$ & $\neg_{\mathtt{c}}A$  \\ \hline
         $\mathbf{t}$ & $\mathbf{f}$ \\ \hline
         $\mathbf{b}$ & $\mathbf{b}$ \\ \hline
         $\mathbf{f}$ & $\mathbf{t}$ \\ \hline
    \end{tabular}
    \quad
    \begin{tabular}{c|c}
         $A$ &$\neg_{\mathtt{i}}A$  \\ \hline
         $\mathbf{t}$ & $\mathbf{f}$ \\ \hline
         $\mathbf{b}$ & $\mathbf{f}$ \\ \hline
         $\mathbf{f}$ & $\mathbf{t}$ \\ \hline
    \end{tabular}
\end{table}

 The set $\setof{\land,\lor, \neg_{\mathtt{c}}}$ of logical connectives is exactly the same set of primitive logical connectives as the propositional part of the logic of paradox~\cite{Priest1979,Priest2002}. 
It is remarked, however, that $A \imp_{\mathtt{c}} B$ is defined as $\neg_{\mathtt{c}} A \lor B$ but $\bot$ cannot be defined in terms of $\setof{\land,\lor, \neg_{\mathtt{c}}}$ (if a formula $B$ defined $\bot$, $B$ would return the value $\mathbf{b}$ for a valuation sending all propositional variables to $\mathbf{b}$, a contradiction). 
Therefore, in terms of three-valued semantics above, $\setof{\land,\lor, \imp_{\mathtt{c}}, \bot}$, i.e., the syntax of $\mathcal{L}_\mathbf{C}$ is stronger than $\setof{\land,\lor, \neg_{\mathtt{c}}}$, i.e, the syntax for the logic of paradox.

The truth table for $\imp_{\mathtt{i}}$ is the same as that of an implication introduced as ``internal implication'' in~\cite{Avron1991a}. This connective are studied also in~\cite{Arieli1996,Arieli1998,Carnielli2000,Carnielli2007,Omori2012,Omori2015}.
\end{remark}

\begin{lem}
\label{lem:mpc_x}
$p, p \imp_{\mathtt{c}} q \not\models_{3} q$
\end{lem}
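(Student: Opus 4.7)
The plan is to exhibit a single counterexample valuation. The natural candidate, guided by the three-valued truth table in Table \ref{table:truth}, is to set $v(p) = \setof{0,1}$ (i.e., the value $\mathbf{b}$) and $v(q) = \setof{0}$ (i.e., the value $\mathbf{f}$); on all other propositional variables the valuation may be arbitrary.

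I would then simply read off, from the clauses in Definition \ref{def:non}, the values of the three relevant formulas under $\overline{v}$. First, $1 \in \overline{v}(p)$ holds directly, since $1 \in \setof{0,1}$. Second, $1 \in \overline{v}(p \imp_{\mathtt{c}} q)$ iff $0 \in \overline{v}(p)$ or $1 \in \overline{v}(q)$; the first disjunct holds because $0 \in \setof{0,1}$. Third, $1 \in \overline{v}(q)$ fails, since $\overline{v}(q) = \setof{0}$. Combining these three observations, both premises are designated (i.e., contain $1$) while the conclusion is not, which by the definition of $\models_{3}$ shows $p, p \imp_{\mathtt{c}} q \not\models_{3} q$.

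There is essentially no obstacle here: the whole argument is a mechanical check against the clauses of Definition \ref{def:non}, and the key conceptual point is already visible in the $\imp_{\mathtt{c}}$-column of Table \ref{table:truth}, where $\mathbf{b} \imp_{\mathtt{c}} \mathbf{f} = \mathbf{b}$ is a designated value even though the antecedent $\mathbf{b}$ is designated and the succedent $\mathbf{f}$ is not. This is exactly the classical failure of modus ponens in the logic of paradox that motivates the whole construction, and it is what will be used in the next step of the paper to argue that $(\texttt{MPC})$ is not admissible in $(\mathbf{C+J})^{-}$.
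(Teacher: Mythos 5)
Your proposal is correct and uses exactly the same counterexample valuation as the paper ($v(p)=\setof{0,1}$, $v(q)=\setof{0}$), merely spelling out the verification in more detail. No differences of substance.
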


\begin{proof}
Take a valuation $v$ such that $v(p)$ = $\setof{0,1}$ and $v(q)$ = $\setof{0}$. 
Then, $\overline{v}(p \imp_{\mathtt{c}} q)$ = $\setof{0,1}$. But, $1 \notin v(q)$. 
Therefore, $p, p \imp_{\mathtt{c}} q \not\models_{3} q$.
\end{proof}

While the logic of paradox~\cite{Priest1979,Priest2002} has a different consequence relation from those of classical logic (as shown in Lemma \ref{lem:mpc_x}), 
it is well-known that the logic of paradox has the same theorems as those of classical logic (see, e.g., \cite[p.310]{Priest2002}). 
We may also extend this fact to $\setof{\land,\lor, \imp_{\mathtt{c}}, \bot}$, i.e., the syntax of $\mathcal{L}_\mathbf{C}$ as follows.

\begin{prop}
\label{prop:cllp}
Let $A \in \mathsf{Form}_{\mathbf{C}}$. Then, $A$ is a tautology in classical logic iff $A$ is 3-valid.
\end{prop}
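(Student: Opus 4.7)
The plan is to connect the classical two-valued semantics on $\mathcal{L}_{\mathbf{C}}$ with the three-valued semantics of Definition~\ref{def:non} via a \emph{refinement} relation between valuations. Call a classical valuation $u : \mathsf{Prop} \to \setof{0,1}$ a refinement of a 3-valuation $v$ if $u(p) \in v(p)$ for every $p \in \mathsf{Prop}$. The central technical lemma I would prove is: for every 3-valuation $v$, every refinement $u$ of $v$, and every $B \in \mathsf{Form}_{\mathbf{C}}$, one has $u(B) \in \overline{v}(B)$, where $u(B)$ is computed by the usual two-valued classical semantics. This is shown by induction on $B$: atoms are handled by the refinement hypothesis, $\bot$ is immediate, and the $\land$ and $\lor$ steps follow directly from the truth tables of Definition~\ref{def:non}. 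Refinements always exist since each $v(p)$ is non-empty.

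For the direction ($\Leftarrow$), suppose $A$ is 3-valid and let $u$ be an arbitrary classical valuation. Set $v(p) := \setof{u(p)}$. A routine induction shows $\overline{v}(B) = \setof{u(B)}$ for every $B \in \mathsf{Form}_{\mathbf{C}}$, because each clause of Definition~\ref{def:non} preserves singleton-valuedness when all inputs are singletons. Then $1 \in \overline{v}(A) = \setof{u(A)}$ gives $u(A) = 1$; as $u$ was arbitrary, $A$ is a classical tautology.

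For the direction ($\Rightarrow$), suppose $A$ is a classical tautology and fix any 3-valuation $v$. Pick a refinement $u$ of $v$. By the refinement lemma, $u(A) \in \overline{v}(A)$. Since $A$ is a classical tautology, $u(A) = 1$, so $1 \in \overline{v}(A)$, i.e., $A$ is 3-valid.

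The main obstacle is the $\imp_{\mathtt{c}}$ step of the refinement lemma, which I would settle by a case split on $u(B \imp_{\mathtt{c}} C)$. If $u(B \imp_{\mathtt{c}} C) = 0$, then classically $u(B) = 1$ and $u(C) = 0$, so by induction $1 \in \overline{v}(B)$ and $0 \in \overline{v}(C)$, and the clause for $\imp_{\mathtt{c}}$ in Definition~\ref{def:non} yields $0 \in \overline{v}(B \imp_{\mathtt{c}} C)$. If $u(B \imp_{\mathtt{c}} C) = 1$, then either $u(B) = 0$, giving $0 \in \overline{v}(B)$ and hence $1 \in \overline{v}(B \imp_{\mathtt{c}} C)$, or $u(C) = 1$, giving $1 \in \overline{v}(C)$ and hence $1 \in \overline{v}(B \imp_{\mathtt{c}} C)$.
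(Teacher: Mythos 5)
Your proposal is correct and is essentially the paper's own argument in different clothing: your refinement lemma ($u(p) \in v(p)$ for all $p$ implies $u(B) \in \overline{v}(B)$ for all $B \in \mathsf{Form}_{\mathbf{C}}$) is exactly the paper's inductive claim $\overline{v}_1(B) \subseteq \overline{v}(B)$ for the particular refinement $v_1$ that sends $\setof{0,1}$ to $\setof{1}$, and the converse direction via singleton-valued $3$-valuations is the same in both. The only (harmless) difference is that you allow an arbitrary choice of refinement where the paper fixes one.
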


\begin{proof}
The proof from right to left is trivial, since $v: \mathsf{Prop} \to \setof{0,1}$ is regarded as a three valued valuation by regarding $0$ and $1$ with $\setof{0}$ and $\setof{1}$, respectively. 
Conversely, assume that $A$ is a tautology and fix any valuation $v: \mathsf{Prop} \to \{ \{0 \}, \{ 0,1 \}, \{ 1 \} \}$. 
Our goal is to show that $1 \in \overline{v}(A)$. 
Define a valuation $v_{1}$ from $v$ by changing all outputs $\{ 0, 1 \}$ of $v$ to $\{ 1 \}$. 
We regard $v_{1}$ as a two-valued valuation function by regarding $\setof{0}$ and $\setof{1}$ with $0$ and $1$, respectively. 
It is easy to see that $v_{1}(p) \subseteq v(p)$ for all $p \in \mathsf{Prop}$. 
We also have $\overline{v}_{1}(B) \subseteq \overline{v}(B)$ for all $B \in \mathsf{Form_{\mathbf{C}}}$ by Definition \ref{def:non} (recall that $C \imp_{\mathtt{c}} D$ is equivalent with $\neg_{\mathtt{c}} C \lor D$). 
Since $A$ is a classical tautology, then $1 \in \overline{v}_{1} (A)$. 
By $\overline{v}_{1}(A) \subseteq \overline{v}(A)$, we conclude that $1 \in \overline{v}(A)$.
\end{proof}

\begin{lem}
 \label{lem:sub}
 Let $A \in \mathsf{Form}_{\mathbf{C}}$. If $\models_{3} A$ then $\models_{3} \sigma(A)$ 
 for all uniform substitutions $\sigma: \mathsf{Prop} \to \mathsf{Form}$. 
\end{lem}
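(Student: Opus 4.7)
The plan is to reduce $3$-validity of $\sigma(A)$ to $3$-validity of $A$ by a standard compositional substitution argument, using that $A$ only contains the connectives $\land,\lor,\imp_{\mathtt{c}},\bot$ and that these are interpreted truth-functionally in Definition \ref{def:non}.

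First, I fix an arbitrary $3$-valued valuation $v: \mathsf{Prop} \to \setof{\setof{0}, \setof{0,1}, \setof{1}}$ and show $1 \in \overline{v}(\sigma(A))$. For this I define an auxiliary valuation $v': \mathsf{Prop} \to \setof{\setof{0}, \setof{0,1}, \setof{1}}$ by setting $v'(p) := \overline{v}(\sigma(p))$ for every $p \in \mathsf{Prop}$. This is well defined since, by Proposition \ref{prop:lem} together with the fact that $\overline{v}$ outputs subsets of $\setof{0,1}$, the value $\overline{v}(\sigma(p))$ always lies in $\setof{\setof{0}, \setof{0,1}, \setof{1}}$.

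Second, I establish by induction on $A \in \mathsf{Form}_{\mathbf{C}}$ the substitution identity $\overline{v}(\sigma(A)) = \overline{v'}(A)$. The base cases are immediate: $\sigma(\bot)=\bot$ gives $\overline{v}(\sigma(\bot)) = \setof{0} = \overline{v'}(\bot)$, and $\sigma(p)$ gives $\overline{v}(\sigma(p)) = v'(p) = \overline{v'}(p)$ by the definition of $v'$. For the inductive cases, since uniform substitution commutes with each connective (i.e.\ $\sigma(B \star C) = \sigma(B) \star \sigma(C)$ for $\star \in \setof{\land,\lor,\imp_{\mathtt{c}}}$), each clause reduces to the corresponding clause of Definition \ref{def:non} applied to the induction hypothesis. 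Note that because $A \in \mathsf{Form}_{\mathbf{C}}$, every subformula is also in $\mathsf{Form}_{\mathbf{C}}$, so the induction never needs to treat the $\imp_{\mathtt{i}}$ clause.

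Finally, the hypothesis $\models_{3} A$ yields $1 \in \overline{v'}(A)$, whence $1 \in \overline{v}(\sigma(A))$ by the identity proved above. Since $v$ was arbitrary, $\models_{3} \sigma(A)$. I do not anticipate a genuine obstacle; the only mildly delicate point is verifying that $v'$ is a legitimate valuation, which is why Proposition \ref{prop:lem} is needed.
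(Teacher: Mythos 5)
Your proof is correct and follows essentially the same route as the paper: both define the auxiliary valuation $v'(p) = \overline{v}(\sigma(p))$ and prove the substitution identity $\overline{v}(\sigma(B)) = \overline{v'}(B)$ by induction, then transfer $1 \in \overline{v'}(A)$ to $1 \in \overline{v}(\sigma(A))$. The only (immaterial) difference is that the paper runs the induction over all of $\mathsf{Form}$ and displays the $\imp_{\mathtt{i}}$ case, whereas you restrict the induction to $\mathsf{Form}_{\mathbf{C}}$, which indeed suffices since every subformula of $A$ lies in $\mathsf{Form}_{\mathbf{C}}$; your explicit appeal to Proposition~\ref{prop:lem} for the well-definedness of $v'$ is a nice touch the paper leaves implicit.
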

\begin{proof}
Assume that $\models_{3} A$.
Fix any valuation ${v}:\mathsf{Prop} \to \{ \{1 \}, \{ 0,1 \}, \{ 0 \} \}$ and any uniform substitution $\sigma: \mathsf{Prop} \to \mathsf{Form}$. 
The goal is to show $1 \in \overline{v}(\sigma(A))$. 
Define ${v'}:\mathsf{Prop} \to \{ \{1 \}, \{ 0,1 \}, \{ 0 \} \}$ as follows:
\[
\begin{array}{lll}
1 \in {v'} (p) & \iff& 1 \in \overline{v}(\sigma(p)), \\
0 \in {v'} (p) & \iff& 0 \in \overline{v}(\sigma(p)), \\
\end{array}
\]  
for all $p \in \mathsf{Prop}$. 
By assumption, we have $1 \in \overline{v'}(A)$. 
By induction on a formula $B$, we can establish:
\[
\begin{array}{lll}
1 \in {\overline{v'}} (B) & \iff& 1 \in \overline{v}(\sigma(B)), \\
0 \in {\overline{v'}} (B) & \iff& 0 \in \overline{v}(\sigma(B)). \\
\end{array}
\]  
Here, we only deal with the case where $B$ is of the form $C \imp_{\mathtt{i}} D$: 
\[
\begin{array}{lll}
1 \in \overline{v'} (C \imp_{\mathtt{i}} D) & \iff & 1 \notin \overline{v'}(C) \text{ or } 1 \in \overline{v'}(D), \\
 & \iff & 1 \notin \overline{v}(\sigma(C)) \text{ or }1 \in \overline{v}(\sigma(D)), \text{ by induction hypothesis,}\\
 & \iff & 1 \in \overline{v} (\sigma(C) \imp_{\mathtt{i}} \sigma(D)), \\ 
 & \iff & 1 \in \overline{v} (\sigma(C \imp_{\mathtt{i}} D)). \\ 
\end{array}
\]  
\[
\begin{array}{lll}
0 \in \overline{v'} (C \imp_{\mathtt{i}} D) & \iff & 1 \in \overline{v'}(C) \text{ and } 0 \in \overline{v'}(D), \\
 & \iff &  1 \in \overline{v}(\sigma(C)) \text{ and } 0 \in \overline{v}(\sigma(D)), \text{ by induction hypothesis,}\\
 & \iff & 0 \in \overline{v} (\sigma(C) \imp_{\mathtt{i}} \sigma(D)), \\ 
 & \iff & 0 \in \overline{v} (\sigma(C \imp_{\mathtt{i}} D)). \\ 
\end{array}
\]
Since we have $1 \in \overline{v'}(A)$, we conclude that $1 \in \overline{v}(\sigma(A))$, as required. 
\end{proof}
\begin{lem}
\label{lem:lps}
Let $A \in \mathsf{Form}$. If $A$ is a theorem of $\mathbf{(C + J)^{-}}$, then $A$ is 3-valid.
\end{lem}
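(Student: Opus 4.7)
The plan is to prove Lemma \ref{lem:lps} by induction on the length of a derivation of $A$ in $(\mathbf{C+J})^{-}$, verifying that every axiom is 3-valid and that every rule preserves 3-validity. Axiom scheme (\texttt{CL}) is the one case that is not handled by a single direct calculation: I would dispatch it by combining Proposition \ref{prop:cllp} (classical tautologies are 3-valid) with Lemma \ref{lem:sub}, which transports 3-validity across uniform substitutions whose targets may involve $\imp_{\mathtt{i}}$.

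For (\texttt{CK}), (\texttt{ID}), (\texttt{CMP}), and (\texttt{PER}), I would unfold the clauses of Definition \ref{def:non} directly and appeal to Proposition \ref{prop:lem} for semantic excluded middle at each step. For example, $1 \in \overline{v}(A \imp_{\mathtt{c}} (B \imp_{\mathtt{i}} A))$ unfolds to the disjunction $0 \in \overline{v}(A)$ or $1 \notin \overline{v}(B)$ or $1 \in \overline{v}(A)$, which Proposition \ref{prop:lem} already forces; notice that this computation establishes (\texttt{PER}) unconditionally, so the persistence side condition plays no role in 3-valued soundness (it was imposed purely for Kripke soundness). A similar collapse handles (\texttt{CMP}): if the two disjuncts in the consequent fail, i.e., $0 \notin \overline{v}(A)$ and $1 \notin \overline{v}(B)$, then by Proposition \ref{prop:lem} we get $1 \in \overline{v}(A)$ and $0 \in \overline{v}(B)$, which makes the antecedent $A \imp_{\mathtt{i}} B$ false under $v$. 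Axiom (\texttt{CK}) is treated by the same style of case split and is the most tedious but still routine; (\texttt{ID}) is immediate from the clause for $\imp_{\mathtt{i}}$.

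For the rules, (\texttt{MPI}) preserves 3-validity because $1 \in \overline{v}(A \imp_{\mathtt{i}} B)$ rewrites to $1 \notin \overline{v}(A)$ or $1 \in \overline{v}(B)$, and combined with $1 \in \overline{v}(A)$ from the other premise this yields $1 \in \overline{v}(B)$. Rule (\texttt{RCN}) is even easier: if $\models_{3} A$ then $1 \in \overline{v}(A)$ for every $v$, and the $\imp_{\mathtt{i}}$ clause immediately gives $1 \in \overline{v}(B \imp_{\mathtt{i}} A)$ for any $B$. There is no genuine obstacle here; the conceptually important point to emphasise is the asymmetry between (\texttt{MPI}), which is 3-sound, and (\texttt{MPC}), which by Lemma \ref{lem:mpc_x} is not — this is precisely the feature that will let us, together with Proposition \ref{fact:ppq}, exhibit the promised unprovable valid formula and thereby derive the semantic incompleteness of $(\mathbf{C+J})^{-}$.
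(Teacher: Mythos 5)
Your proposal is correct and follows essentially the same route as the paper: induction on derivations, dispatching (\texttt{CL}) via Proposition \ref{prop:cllp} together with Lemma \ref{lem:sub}, verifying the remaining axioms by unfolding Definition \ref{def:non} with Proposition \ref{prop:lem} supplying $1 \in \overline{v}(A)$ or $0 \in \overline{v}(A)$, and checking that (\texttt{MPI}) and (\texttt{RCN}) preserve 3-validity; your observation that (\texttt{PER}) needs no persistence restriction in the three-valued setting is also made explicitly in the paper. The only difference is that you leave the (\texttt{CK}) case as ``routine'' where the paper writes out the case split in full, but the argument you indicate is exactly the one used there.
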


\begin{proof}
It suffices to show each axiom of $\mathbf{(C + J)}^{-}$ is 3-valid and each rule of the system preserves 3-validity. 
\begin{itemize}
    \item $\texttt{(CL)}$ Let $A$ be an instance of a classical tautology, i.e., $A$ is of the form $\sigma(A')$ where $A' \in \mathsf{Form}_{\mathbf{C}}$ is a classical tautology and $\sigma: \mathsf{Prop} \to \mathsf{Form}$ is a uniform substitution, where $\mathsf{Form}$ is the set of all formulas of the syntax for $(\mathbf{C+J})^{-}$. 
    Our goal is to show $\models_{3} \sigma(A')$. 
    By Lemma \ref{lem:sub}, it suffices to show $\models_{3} A'$. Since $A' \in \mathsf{Form}_{\mathbf{C}}$, Proposition \ref{prop:cllp} tells us that we need to establish that $A'$ is a classical tautology. But, this is our assumption. 
    \if0
    Each instance has a ``form'' of a classical tautology. By Proposition \ref{prop:cllp}, such an original classical tautology is 3-valid. From this and Lemma \ref{lem:sub}, an instance of classical tautology which contains intuitionistic implication is also 3-valid. For example, $\neg_{\mathtt{c}}(p \imp_{\mathtt{i}} q) \lor (p \imp_{\mathtt{i}} q)$ is an instance of classical tautology containing intuitionistic implication. This formula has the ``form'' of the classical tautology $\neg_{\mathtt{c}} r \lor r$. Therefore, $1 \in \overline{v} (\neg_{\mathtt{c}} r \lor r)$ holds whatever assignment the atomic formula $r$ has. Thus, by Lemma \ref{lem:sub}, $\neg_{\mathtt{c}}(p \imp_{\mathtt{i}} q) \lor (p \imp_{\mathtt{i}} q)$ is also 3-valid.
    \fi
    \item $\texttt{(ID)}$ This is trivial, since $1 \in \overline{v}(A \to B)$ iff $1 \in \overline{v}(A)$ implies $1 \in \overline{v}(B)$. 
    \item $\texttt{(CK)}$ We show $\models_{3} (A \imp_{\mathtt{i}}(B \imp_{\mathtt{c}} C)) \imp_{\mathtt{c}} ((A \imp_{\mathtt{i}} B) \imp_{\mathtt{c}} (A \imp_{\mathtt{i}} C))$. Fix any valuation $v: \mathsf{Prop} \to \setof{\setof{1}, \setof{0,1}, \setof{0}}$. Our goal is to show $1 \in \overline{v} ((A \imp_{\mathtt{i}}(B \imp_{\mathtt{c}} C)) \imp_{\mathtt{c}} ((A \imp_{\mathtt{i}} B) \imp_{\mathtt{c}} (A \imp_{\mathtt{i}} C)))$. It suffices to show that $0 \notin \overline{v} (A \imp_{\mathtt{i}}(B \imp_{\mathtt{c}} C))$ implies $1 \in \overline{v}((A \imp_{\mathtt{i}} B) \imp_{\mathtt{c}} (A \imp_{\mathtt{i}} C))$. Suppose $0 \notin \overline{v} (A \imp_{\mathtt{i}}(B \imp_{\mathtt{c}} C))$. This implies $1 \notin \overline{v}(A)$ or $0 \notin \overline{v}(B \imp_{\mathtt{c}} C)$. 
    For each case, we establish $1 \in \overline{v}((A \imp_{\mathtt{i}} B) \imp_{\mathtt{c}} (A \imp_{\mathtt{c}} B))$, i.e., 
    $0 \in \overline{v}(A \imp_{\mathtt{i}} B)$ or 
    $1 \in \overline{v}(A \imp_{\mathtt{c}} C)$. 
    If $1 \notin \overline{v}(A)$ holds, then $1 \in \overline{v}(A \imp_{\mathtt{i}} C)$ holds hence $1 \in \overline {v}((A \imp_{\mathtt{i}} B) \imp_{\mathtt{c}} (A \imp_{\mathtt{i}} C))$. If $0 \notin \overline{v}(B \imp_{\mathtt{c}} C)$ holds, then $1 \notin \overline{v}(B)$ or $0 \notin \overline{v}(C)$ holds. 
    Without loss of generality, we can also assume $1 \in \overline{v}(A)$. 
    If $1 \notin \overline{v}(B)$ holds, we derive from Proposition \ref{prop:lem} that $0 \in \overline{v}(B)$. With $1 \in \overline{v}(A)$, it implies $0 \in \overline{v}(A \imp_{\mathtt{i}} B)$. Therefore, we can obtain $1 \in \overline {v}((A \imp_{\mathtt{i}} B) \imp_{\mathtt{c}} (A \imp_{\mathtt{i}} C))$. If $0 \notin \overline{v}(C)$ holds, then we deduce from Proposition \ref{prop:lem} that $1 \in \overline{v}(C)$. By this, we can obtain $1 \in \overline{v}(A \imp_{\mathtt{i}} C)$. Therefore, $1 \in \overline {v}((A \imp_{\mathtt{i}} B) \imp_{\mathtt{c}} (A \imp_{\mathtt{i}} C))$ holds. This finishes our argument by cases. 
    \item $\texttt{(CMP)}$ We show $\models_{3} (A \imp_{\mathtt{i}} B) \imp_{\mathtt{c}} (A \imp_{\mathtt{c}} B)$. Fix any valuation $v: \mathsf{Prop} \to \setof{\setof{1}, \setof{0,1}, \setof{0}}$. Our goal is to show $1 \in \overline{v}((A \imp_{\mathtt{i}} B) \imp_{\mathtt{c}} (A \imp_{\mathtt{c}} B))$. It suffices to show that $0 \notin \overline{v}(A \imp_{\mathtt{i}} B)$ implies $1 \in \overline{v}(A \imp_{\mathtt{c}} B)$. Suppose $0 \notin \overline{v}(A \imp_{\mathtt{i}} B)$. 
    This implies $1 \notin \overline{v}(A)$ or $0 \notin \overline{v}(B)$. 
    If $1 \notin \overline{v}(A)$ holds, then, by Proposition \ref{prop:lem}, $0 \in \overline{v}(A)$ holds hence $1 \in \overline{v}(A \imp_{\mathtt{c}} B)$. If $0 \notin \overline{v}(B)$, then we deduce from Proposition \ref{prop:lem} that $1 \in \overline{v}(B)$ hence $1 \in \overline{v}(A \imp_{\mathtt{c}} B)$. 
    For both cases, we have established $1 \in \overline{v}(A \imp_{\mathtt{c}} B)$.
    \item $\texttt{(PER)}$ For this validity, we do not have to impose any restriction on $A$ here. We show $\models_{3} A \imp_{\mathtt{c}} (B \imp_{\mathtt{i}} A)$. Fix any valuation $v: \mathsf{Prop} \to \setof{\setof{1}, \setof{0,1}, \setof{0}}$. Our goal is to show $1 \in \overline{v}(A \imp_{\mathtt{c}}(B \imp_{\mathtt{i}} A))$. It suffices to show that $0 \notin \overline{v} (A)$ implies $1 \in \overline{v} (B \imp_{\mathtt{i}} A)$. Suppose  $0 \notin \overline{v}(A)$. By Proposition \ref{prop:lem}, we get $1 \in \overline{v}(A)$ hence $1 \in \overline{v}(B \imp_{\mathtt{i}} A)$, as required. 
    \item $\texttt{(MPI)}$ We show that $\models_{3} A$ and $\models_{3} A \imp_{\mathtt{i}} B$ imply $\models_{3} B$. Suppose $\models_{3} A$ and $\models_{3} A \imp_{\mathtt{i}} B$. Our goal is to show $\models_{3} B$. Fix any valuation $v: \mathsf{Prop} \to \setof{\setof{1}, \setof{0,1}, \setof{0}}$. We show $1 \in \overline{v}(B)$. By the supposition, we have $1 \in \overline{v}(A)$ and $1 \in \overline{v}(A \imp_{\mathtt{i}}B)$. Since $1 \in \overline{v}(A)$ and $1 \notin \overline{v}(A)$ are not compatible, we can deduce from $1 \in \overline{v}(A \imp_{\mathtt{i}}B)$ that $1 \in \overline{v}(B)$, as desired.
    \item $\texttt{(RCN)}$ We show that $\models_{3} A$ implies $\models_{3} B \imp_{\mathtt{i}} A$. Suppose $\models_{3} A$. Our goal is to show $\models_{3} B \imp_{\mathtt{i}} A$. Fix any valuation $v: \mathsf{Prop} \to \setof{\setof{1}, \setof{0,1}, \setof{0}}$. We show $1 \in \overline{v}(B \imp_{\mathtt{i}} A)$. By the supposition, $1 \in \overline{v}(A)$ holds. Therefore, we can obtain $1 \in \overline{v}(B \imp_{\mathtt{i}} A)$ straightforwardly by Definition \ref{def:non}.\qedhere
    \end{itemize}

\end{proof}

\begin{lem}
 \label{lem:mod}
$\models_{3} (p \land (p \imp_{\mathtt{c}} q)) \imp_{\mathtt{i}} q$ iff $p, p \imp_{\mathtt{c}} q \models_{3} q$.
\end{lem}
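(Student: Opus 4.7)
The plan is to prove this as a direct unfolding of definitions — essentially it is a single-formula deduction theorem for $\imp_{\mathtt{i}}$ built into the semantics of Definition \ref{def:non}. I would not introduce any new machinery; both directions fall out simultaneously from the biconditional analysis below.

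First I would fix an arbitrary valuation $v: \mathsf{Prop} \to \setof{\setof{1}, \setof{0,1}, \setof{0}}$ and analyse $1 \in \overline{v}((p \land (p \imp_{\mathtt{c}} q)) \imp_{\mathtt{i}} q)$. By the clause for $\imp_{\mathtt{i}}$ in Definition \ref{def:non}, this is equivalent to
\[
1 \notin \overline{v}(p \land (p \imp_{\mathtt{c}} q)) \text{ or } 1 \in \overline{v}(q).
\]
Next I would apply the clause for $\land$, which gives $1 \in \overline{v}(p \land (p \imp_{\mathtt{c}} q))$ iff $1 \in \overline{v}(p)$ and $1 \in \overline{v}(p \imp_{\mathtt{c}} q)$. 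Contraposing, the previous disjunction becomes the implication: if $1 \in \overline{v}(p)$ and $1 \in \overline{v}(p \imp_{\mathtt{c}} q)$, then $1 \in \overline{v}(q)$.

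Finally I would universally quantify over $v$. The left-hand side $\models_{3} (p \land (p \imp_{\mathtt{c}} q)) \imp_{\mathtt{i}} q$ asserts that the displayed disjunction holds for every $v$, and by the equivalence just derived this is the same as asserting, for every $v$, that $1 \in \overline{v}(p)$ and $1 \in \overline{v}(p \imp_{\mathtt{c}} q)$ together imply $1 \in \overline{v}(q)$, which is precisely the definition of $p, p \imp_{\mathtt{c}} q \models_{3} q$. Both directions of the biconditional are obtained in one swoop.

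There is no real obstacle to anticipate: the lemma is a bookkeeping step whose only content is that the semantic clause for $\imp_{\mathtt{i}}$ is the classical (material) implication at the level of the designated value $1$, and that $\land$ distributes over the designated-value check. The point of isolating this lemma, I expect, is to combine it with Lemma \ref{lem:mpc_x} to conclude $\not\models_{3} (p \land (p \imp_{\mathtt{c}} q)) \imp_{\mathtt{i}} q$, which together with Lemma \ref{lem:lps} will yield the semantic incompleteness of $(\mathbf{C+J})^{-}$.
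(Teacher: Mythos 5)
Your proof is correct and is essentially the paper's own argument: the paper simply cites the equivalence $1 \in \overline{v}(A \imp_{\mathtt{i}} B)$ iff ($1 \in \overline{v}(A)$ implies $1 \in \overline{v}(B)$), which is exactly the unfolding you carry out, supplemented by the routine $\land$-clause. No differences worth noting.
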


\begin{proof}
 This follows from the equivalence: $1 \in \overline{v}(A \imp_{\mathtt{i}}B)$ iff $1 \in \overline{v}(A)$ implies $1 \in \overline{v}(B)$. 
\end{proof}

\begin{thm}
\label{thm:noc}
The formula $(p \land (p \imp_{\mathtt{c}} q)) \imp_{\mathtt{i}} q$ is not a theorem in $\mathbf{(C + J)}^{-}$.
\end{thm}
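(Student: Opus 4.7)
The plan is to use the three-valued semantics of Definition \ref{def:non} as a non-standard semantics for which $(\mathbf{C+J})^{-}$ is sound (by Lemma \ref{lem:lps}) but which refutes our target formula. This is a standard soundness-plus-countermodel argument, and essentially all the work has already been carried out in the preceding lemmas; what remains is to chain them together.

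First, I would take the valuation from the proof of Lemma \ref{lem:mpc_x}, namely $v(p) = \setof{0,1}$ and $v(q) = \setof{0}$, which witnesses $p, p \imp_{\mathtt{c}} q \not\models_{3} q$. Second, I would apply Lemma \ref{lem:mod} to translate this failure of 3-consequence into a failure of 3-validity, obtaining $\not\models_{3} (p \land (p \imp_{\mathtt{c}} q)) \imp_{\mathtt{i}} q$. Finally, I would invoke the contrapositive of Lemma \ref{lem:lps}: every theorem of $(\mathbf{C+J})^{-}$ is 3-valid, so any non-3-valid formula, in particular ours, cannot be a theorem. Thus $(p \land (p \imp_{\mathtt{c}} q)) \imp_{\mathtt{i}} q$ is not derivable in $(\mathbf{C+J})^{-}$, as desired.

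The hard part of the overall development is not this final assembly but the soundness Lemma \ref{lem:lps} that underlies it, where one must check that $(\mathtt{CK})$, $(\mathtt{CMP})$, and especially the unrestricted-looking $(\mathtt{PER})$ remain 3-valid, and that $(\mathtt{MPI})$ preserves 3-validity even though $(\mathtt{MPC})$ does not. Given that lemma already in hand, Theorem \ref{thm:noc} follows in just a few lines, and Lemma \ref{lem:mpc_x} can be read as the semantic diagnosis of \emph{why} the failure of classical modus ponens in $(\mathbf{C+J})^{-}$ is precisely what blocks derivability of $(p \land (p \imp_{\mathtt{c}} q)) \imp_{\mathtt{i}} q$: the designated but non-classical value $\mathbf{b}$ for $p$ allows $p$ and $p \imp_{\mathtt{c}} q$ to both be designated while $q$ is not.
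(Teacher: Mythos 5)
Your proposal is correct and follows essentially the same route as the paper: the paper's proof of Theorem \ref{thm:noc} likewise chains Lemma \ref{lem:mpc_x}, Lemma \ref{lem:mod}, and Lemma \ref{lem:lps} (stated there as a proof by contradiction rather than via the contrapositive, which is an inessential difference).
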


\begin{proof}
 Suppose $(p \land (p \imp_{\mathtt{c}} q)) \imp_{\mathtt{i}} q$ is a theorem in $\mathbf{(C + J)}^{-}$. By Lemma \ref{lem:lps}, $(p \land (p \imp_{\mathtt{c}} q)) \imp_{\mathtt{i}} q$ is 3-valid. By Lemma \ref{lem:mod}, $p, p \imp_{\mathtt{c}} q \models_{3} q$ should hold. This is a contradiction with Lemma \ref{lem:mpc_x}. 
\end{proof}

\begin{corollary}
Hilbert system $\mathbf{(C + J)}^{-}$ is not semantically complete, i.e., 
there exists a formula $C$ such that 
$C$ is not a theorem of $\mathbf{(C + J)}^{-}$ but $C$ is valid in Kripke semantics in Definition \ref{def:model2}. 
\end{corollary}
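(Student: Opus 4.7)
The plan is to exhibit an explicit witness for the incompleteness, namely the formula
$C := (p \land (p \imp_{\mathtt{c}} q)) \imp_{\mathtt{i}} q$,
and then verify the two conjuncts that the definition of semantic incompleteness demands: that $C$ is valid in the Kripke semantics of Definition \ref{def:model2}, and that $C$ is not derivable in $(\mathbf{C+J})^{-}$. Both facts have essentially already been proved in the body of the section, so the corollary is really just an assembly step.

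First I would invoke Proposition \ref{fact:ppq}, which already establishes that $C$ is valid in the Kripke semantics. No further semantic analysis is needed at this point; in particular I do not need to re-unfold the satisfaction clauses for $\imp_{\mathtt{c}}$ and $\imp_{\mathtt{i}}$, since the proposition packages that computation.

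Second, I would invoke Theorem \ref{thm:noc}, which states that $C$ is not a theorem of $(\mathbf{C+J})^{-}$. Putting these two facts side by side yields, by definition, that $(\mathbf{C+J})^{-}$ is semantically incomplete: there exists a formula (namely $C$) valid in every Kripke model in the sense of Definition \ref{def:model2} but not derivable in the Hilbert system.

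There is essentially no obstacle at this stage; the corollary is a direct consequence of Proposition \ref{fact:ppq} and Theorem \ref{thm:noc}, and the only real work has already been done, namely constructing the non-standard three-valued semantics of Definition \ref{def:non}, verifying soundness of $(\mathbf{C+J})^{-}$ with respect to it in Lemma \ref{lem:lps}, and observing via Lemmas \ref{lem:mpc_x} and \ref{lem:mod} that $C$ fails to be $3$-valid. The proof of the corollary itself thus amounts to little more than citing these two earlier results and applying the definition of semantic completeness.
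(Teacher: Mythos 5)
Your proposal is correct and follows exactly the same route as the paper, whose proof of the corollary is simply the combination of Proposition \ref{fact:ppq} (Kripke validity of $(p \land (p \imp_{\mathtt{c}} q)) \imp_{\mathtt{i}} q$) and Theorem \ref{thm:noc} (its underivability in $(\mathbf{C+J})^{-}$). Nothing is missing; the corollary is indeed just this assembly step.
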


\begin{proof}
 By Proposition \ref{fact:ppq} and Theorem \ref{thm:noc}.
\end{proof}

\noindent The argument described above implies, in order to obtain the completeness theorem, the rule \texttt{(MPC)} is necessary. If \texttt{(MPC)} is added, Theorem \ref{thm:noc} will no longer hold. This is because Lemma \ref{lem:lps} does not hold for $\mathbf{C+J}$, since (\texttt{MPC}) does not preserve 3-validity, which is a well-known feature of the logic of paradox.

\bibliographystyle{plain}

\end{document}